\newtheorem{lemma}{Lemma}
\title{Greedy SLIM: A SLIM-Based Approach For Preference Elicitation}
\newif\ifuniqueAffiliation
\author{ Claudius Proissl\\
	University of Stuttgart\\
	Germany \\
	\And
	Amel Vatic \\
	University of Stuttgart\\
	Germany\\
	\And
	Helmut Waldschmidt \\
	University of Stuttgart\\
	Germany\\
}
\newbox{\orcid}\sbox{\orcid}{\includegraphics[scale=0.06]{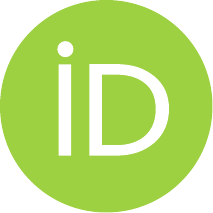}} 
\author[1]{%
	\href{https://orcid.org/0000-0000-0000-0000}{\usebox{\orcid}\hspace{1mm}David S.~Hippocampus\thanks{\texttt{hippo@cs.cranberry-lemon.edu}}}%
}
\author[1,2]{%
	\href{https://orcid.org/0000-0000-0000-0000}{\usebox{\orcid}\hspace{1mm}Elias D.~Striatum\thanks{\texttt{stariate@ee.mount-sheikh.edu}}}%
}
\affil[1]{Department of Computer Science, Cranberry-Lemon University, Pittsburgh, PA 15213}
\affil[2]{Department of Electrical Engineering, Mount-Sheikh University, Santa Narimana, Levand}
\begin{document}
\maketitle

\begin{abstract}
	Preference elicitation is an active learning approach to tackle the cold-start problem of recommender systems. Roughly speaking, new users are asked to rate some carefully selected items in order to compute appropriate recommendations for them. 
	
	To the best of our knowledge, we are the first to propose a method for preference elicitation that is based on SLIM \cite{ning2011slim}, a state-of-the-art technique for top-N recommendation.
	
	Our approach mainly consists of a new training technique for SLIM, which we call Greedy SLIM. This technique iteratively selects items for the training in order to minimize the SLIM loss greedily.
	
	We conduct offline experiments as well as a user study to assess the performance of this new method. The results are remarkable, especially with respect to the user study. We conclude that Greedy SLIM seems to be more suitable for preference elicitation than widely used methods based on latent factor models.
\end{abstract}

\keywords{preference elicitation \and SLIM \and cold-start problem}

\section{Introduction}
Providing accurate top-N recommendations is an important task for many online services. Among the most successful approaches to this problem are those in the field of collaborative filtering, which process user-item interactions in order to compute recommendations. The accuracy of these approaches is known to suffers for users with only a few item interactions, which is always the case if new users enter the system. 

This issue is known as the user cold-start problem \cite{lam2008addressing} and it can be addressed in many different ways \cite{ElahiMehdi2019UPER}. Perhaps the most popular approach is to augment the input parameters with other sources of information by, for instance, combining collaborative filtering with content-based filtering \cite{soboroff1999combining,NIPS2017_51e6d6e6,LIKA20142065} or by incorporating cross-domain knowledge about the new users \cite{fernandez2016alleviating,10.1145/2645710.2645777}. 

A different way to tackle the cold-start problem is to ask the new users to rate some carefully selected items during an onboarding process. We call this the questionnaire approach, also known as preference elicitation \cite{parapar2021diverse,sepliarskaia2018preference,ElahiMehdi2019UPER}. This method aims to create a first basis of ratings for each user with which the recommender system can produce meaningful results. 

In this work we address the questionnaire approach without claiming that it is the preferred way. Our humble advice to practitioners is to use any information about the users that is available. There are, however, many conceivable scenarios where this information is simply insufficient to work with and then the questionnaire approach may be the best option. Another strong point of this way is that it inherits all the benefits of collaborative filtering: it can be widely applied, no domain-specific knowledge is necessary, and the users can stay anonymous. 

The general question we address in this work is, given a top-N recommender of your choice, which items should new users be asked to rate in order to maximize the accuracy of the subsequently computed recommendations? This is a classic problem in the field of active learning and similar questions have been raised in many other publications before \cite{towards, adaptive, sepliarskaia2018preference, rokach2012initial,fonarev2016efficient,Wang_Wu_Wang_2017}. Clearly, the answer may depend on the chosen recommender system. In this aspect, the existing literature is surprisingly confident that latent factor models (LFM, also known as matrix factorization; see, for instance, \cite{Koren2022}) are the correct choice.

In this work we propose a different approach and use SLIM \cite{ning2011slim} as recommender system. SLIM is a well-established method for top-N recommendation in the field of collaborative filtering. Ever since its publication in 2011, SLIM has attracted a lot of attention in the scientific community and has been inspiration for similar approaches \cite{kabbur2013fism,10.1145/2645710.2645756,steck2019embarrassingly,steck2019embarrassingly,steck2020admm}. 

Still, one may wonder why we select a method that is more than ten years old. First, to us, SLIM is more a concept than a concrete method, just as the LFM is and there are modern and very efficient incarnations of it \cite{steck2019embarrassingly,steck2020admm}. Second, there are available implementations of SLIM everywhere. It is therefore safe to bet that SLIM will stay important for the next couple of years.

The concrete question we address in this work is how SLIM can be used to construct efficient questionnaires. The answer we propose in this work is that the typical way to train SLIM is not suitable to tackle the cold-start problem (Section \ref{sec:firstapproach}). We therefore discuss in Section \ref{sec:greedyslim} a new preprocessing routine that trains SLIM iteratively item by item, each turn greedily selecting the item that minimizes the loss. We call this approach Greedy SLIM. Our training, thus, already orders the items by importance and our questionnaire simply consists of the $k$ most important items (where $k$ can be chosen arbitrarily).

In Section \ref{sec:experiments} we compare our approach with a popular existing preference elicitation method and in Section \ref{sec:userstudy} we present our results from a small user study we conducted.

\section{Related Work}
\label{sec:relatedwork}
We are not the first to adapt the training phase of SLIM to a specific problem setting. For instance, in \cite{steck2020admm} a training method based on the Alternating Direction Method of Multipliers \cite{boyd2011distributed} is proposed. This approach decouples the run time of the training phase from the number of users, which is especially important if the number of users is much larger than the number of items. 

Improving the run time of the training phase is also the main objective of the approach in \cite{steck2019embarrassingly}, which slightly changes the definition of SLIM in order to apply more efficient training methods.

In principle, we think that the ideas presented in this work can be combined with any autoencoder-based top-N recommender. We chose SLIM as we think that it is the most established approach.

Preference elicitation is a very active research direction and in our humble opinion it is hard to tell which methods are state-of-the-art. Most existing approaches are based on LFMs. Some examples are \cite{towards, sepliarskaia2018preference, rokach2012initial,fonarev2016efficient,Wang_Wu_Wang_2017}. For instance, in \cite{towards} a probabilistic method is proposed that tries to find the optimal trade-off between exploration and exploitation when selecting the next question. This strategy is motivated by the multi-armed bandit problem. The general idea is to assign weights to all existing users denoting the probability that the new user rates items similarly. The parameters for the new user are then obtained by taking the weighted sum of the existing users' parameters. We call this approach in the following $Q_{\text{Bandit}}$ for bandit questionnaire.

The authors of \cite{adaptive} present a similar method based on decision trees. Their algorithm chooses the questions greedily by taking the question that minimizes the expected root mean squared error (RMSE) of the new user's predicted ratings after receiving the answer to the selected question. The shortcoming of this approach is that the RMSE is not a good measure for the performance of top-N recommender systems as it was shown in \cite{cremonesi2010performance}. It is, however, one of the few existing techniques that is not based on the LFM.

Another interesting LFM-based approach is described in \cite{sepliarskaia2018preference}. Very roughly speaking, it selects questions based on a kind of binary search in order to narrow down the region in the latent factor space where the feature vector of the new user may live. We found that this approach was very appealing from a theoretical point of view. 

Despite these and other interesting techniques we found it hard to find appropriate baselines for our approach. First of all, while there is a lot of literature about the cold-start problem in general, the questionnaire approach is less popular. Furthermore, the reproducibility issue within the recommender systems community is also an issue when it comes to questionnaires. We had to implement all the baselines we wanted to consider on our own. For \cite{adaptive} and \cite{sepliarskaia2018preference} we, unfortunately, obtained poor results. We therefore decided to not include them in our official experiments as there is always the possibility of a misinterpretation or an implementation error.

We chose $Q_{\text{Bandit}}$ as the main baseline for our experiments in Section \ref{sec:experiments} and \ref{sec:userstudy}. We found that this method is popular and a good representative of many other LFM-based approaches. Furthermore, the method is intuitive and, hence, we are confident that our implementation is correct.

\section{Preliminaries}
\label{sec:preliminaries}
In the remainder of this work we assume that we are given a set of $m$ users $\mathcal{U} :=\lbrace u_1,\,u_2,\dots,\,u_m\rbrace$ and a set of $n$ items $\mathcal{I}:=\lbrace i_1,\,i_2,\dots,\,i_n\rbrace$. Furthermore, we have a user-item interaction matrix $X$ of size $m\times n$. The definition of a user-item interaction differs from application to application. In our case user $u$ interacts with item $i$ by setting the rating $r_{ui}\in[1,\,5]$. Therefore, the values $x_{ui}$ of $X$ are either $0$ if user $u$ has not rated item $i$ yet or $x_{ui} = r_{ui}$.

Our vectors are column vectors and written with bold lowercase letters. For any matrix $A$ we write $\textbf{a}_j$ for the $j$-th column of $A$ and $\textbf{a}^T_j$ for its $j$-th row. We write $A^T$ for the transposed matrix of $A$. Furthermore, we write $\lVert\cdot\rVert_1$ for the $L_1$-norm and $\lVert\cdot\rVert_F$ for the Frobenius norm. For any set $S$ we write $\vert S\vert$ to denote its size.
\subsection{Questionnaire}
We define a questionnaire to be a function $Q:~\mathbb{R}^{m\times n} \times \mathcal{U}\rightarrow \mathcal{I}$ that takes as input the user-item interaction matrix $X$ and a user $u$. The output of this function is an item $i$ that the user $u$ should rate. The questionnaire is therefore allowed to reveal some unknown ratings of the input user $u$.

We always consider questionnaires in combination with a recommender system $R:~\mathbb{R}^{m\times n} \times \mathcal{U} \times N\in\mathbb{N}\rightarrow \mathcal{I}^N$. We interpret recommender systems as a function $R$ that takes as input the user-item interaction matrix $X$, a user $u$ and a number $N$ and returns $N$ sorted items as recommendations for user $u$. For a given recommender system, our goal is to find the right questionnaire in order to maximize the accuracy of the recommender system for the input user $u$. In this work we try to find a good questionnaire for SLIM.

An important application of questionnaires is when the input user $u$ is new to the system, which means the row $\textbf{x}^T_u$ of $X$ is a zero-vector. This is also the scenario we consider in this work. In such a case, it makes sense to distinguish between two types of questionnaires, static and dynamic questionnaires. Static questionnaires ask the same questions to all new users whereas the output of dynamic questionnaires depends on the already given answers of the user. 

While dynamic questionnaires are more flexible with respect to the asked questions, the computation of the questions often needs to be conducted online. This means that the questions must be computed fast. Static questionnaires can be computed in a preprocessing phase such that much more complex calculations are affordable. Our approach describes a static questionnaire but we believe that our ideas can be easily extended to the dynamic setting.

In this work we assume that a question always consists of a single item that the user should rate. Clearly, there are other types of questions that are at least as plausible. For instance, a question could also consist of two items and the users should tell which one they prefer. In \cite{towards} a way is described to extend $Q_{\text{Bandit}}$ to this question type. There are concerns that so called absolute questions, questions about one item, are too difficult to answer accurately. We do see that point as well but prefer to keep this dimension of complexity out for the sake of readability and compactness.

\subsection{SLIM}
\label{subsec:slim}
In SLIM \cite{ning2011slim} we are looking for a non-negative matrix $W$ of size $n\times n$ such that $XW \approx X$. As such, the problem could be trivially solved by choosing $W$ to be the identity matrix. We therefore additionally require the diagonal of $W$ to be zero. A suitable matrix $W$ reveals interaction similarities among items. For instance, if items $i_1$ and $i_2$ had identical user interactions, we could set $w_{i_1i_2} = w_{i_2i_1} = 1$. The non-negativity constraint ensures that $W$ only reflects positive interaction correlations. In \cite{steck2020admm} it is shown that for some datasets it is beneficial to drop this constraint. We call $W$ the SLIM matrix.
\newpage
The SLIM loss function $l_{SLIM}:~\mathbb{R}^{n\times n}\rightarrow \mathbb{R}$ is defined as
\begin{equation}
\begin{aligned}
l_{SLIM}\left(W\right):=\lVert X - XW\rVert_F^2 + \lambda_F \lVert W\rVert_F^2 + \lambda_1 \lVert W\rVert_1,
\end{aligned}
\end{equation}
where $\lambda_F$ and $\lambda_1$ are fixed parameters. The complete optimization problem is given below.
\begin{equation}
\begin{aligned}
&\min_W &l_{SLIM}\left(W\right)\\
&\text{subject to} &W\ge 0\\
&&\text{diag}\left(W\right) = 0
\end{aligned}
\end{equation}

In \cite{ning2011slim} the authors propose coordinate descent \cite{friedman2010regularization} in order to compute $W$. Another approach based on the Alternating Direction Method of Multipliers is presented in \cite{steck2020admm}. In this work we propose a greedy strategy to compute $W$, which we present in Section \ref{sec:greedyslim}.

In the following we discuss how we can compute top-N recommendations for any user $u\in\mathcal{U}$ using the SLIM matrix $W$. We define the predicted relevance $\tilde{r}_{ui}$ of item $i\in\mathcal{I}$ for user $u$ as
\begin{align*}
	\tilde{r}_{ui} := \textbf{x}_u^T \textbf{w}_i,
\end{align*}
where $\textbf{x}_u^T$ is the row of matrix $X$ that corresponds to the item interactions of user $u$ and $\textbf{w}_i$ is the $i$-th column of matrix $W$. We sort the items $i\in\mathcal{I}$ with $x_{ui} = 0$ by their predicted relevance in descending order and show user $u$ the list of the first $N$ items ($x_{ui} = 0$ because we restrict our recommendations to unknown user-item interactions).

\subsection{Evaluation Method}
\label{subsec:evalmethod}

\subsubsection{Evaluation Metrics}
We mainly use the Normalized Discounted Cumulative Gain (NDCG) in order to compare different top-N recommendations. The NDCG is designed to measure the relevance of search results, which fits very well to our setting. We set the gain $g_{ui}$ of user $u$ for item $i$ to be $2^{x_{ui}} - 1$ in order to emphasize that a high rating is much more important than a medium rating.

Let $\mathcal{I}_R\subseteq \mathcal{I}$ be the item subset from which the recommender system is allowed to draw the recommendations. This may not be the complete item set $\mathcal{I}$ as we sometimes exclude items (discussed in Section \ref{subsubsec:offlineexperiments}). The cumulative gain $CG:~\mathbb{R}^{m\times n}\times\mathcal{U}\times2^{\mathcal{I}_R}$ takes the interaction matrix $X$, a user $u$ and an item subset $I\subseteq \mathcal{I}_R$ as input and is defined as

\begin{align*}
	CG\left(X,\,u,\,I\right):= \sum_{i\in I} 2^{x_{ui}} - 1.
\end{align*}

The discounted cumulative gain $DCG$ turns the subset $I$ into a sorted list and weights the gain depending on the position of the item. This is motivated by the goal to put the best search results on top of the list.

\begin{align*}
	DCG\left(X,\,u,\,I\right):= \sum_{1\le j \le \vert I\vert} \frac{2^{x_{uI[j]}} - 1}{\log_2\left(j+1\right)}.
\end{align*}

The NDCG is the DCG normalized by the maximum DCG among all subsets $I\subseteq \mathcal{I}_R$ (for a fixed user $u$). This ensures that the NDCG is a number between 0 and 1. We write NDCG@$N$ to emphasize that $\vert I\vert$ is $N$. 

Other metrics we consider are Precision@$N$ and Recall@$N$. Let $I_u$ be the set of items rated by user $u$ and let $I_R$ be the recommendation for this user. Precision is then defined as
\begin{align*}
	\frac{\vert I_R\cap I_u\vert}{\vert I_R\vert},
\end{align*}
while Recall is defined as
\begin{align*}
	\frac{\vert I_R\cap I_u \vert}{\vert I_u\vert}.
\end{align*}

\subsubsection{Offline Experiments}
\label{subsubsec:offlineexperiments}
Our offline experiments of this work have the following structure. We first randomly split the users $\mathcal{U}$ into disjoint training and test sets $\mathcal{U}_{\text{train}}$ and $\mathcal{U}_{\text{test}}$ with $\vert\mathcal{U}_{\text{test}}\vert = 0.1\cdot \vert\mathcal{U}\vert$. The corresponding interaction matrices $X_{\text{train}}$ and $X_{\text{test}}$ only contain the interactions of the respective users. 

For each pair $(Q,\,R)$ of questionnaires and recommender systems we want to consider we conduct the following experiment for each test user $u\in \mathcal{U}_{\text{test}}$. We add a zero row for the user $u$ to the interaction matrix $X_{\text{train}}$ and call the thereby obtained matrix $X'$. We then iteratively call the questionnaire $Q\left(X',\,u\right)$, which returns an item $i$, and copy the user-item interaction $r_{ui}$ from $X_{\text{test}}$ to $X'$. When we call $Q\left(X',\,u\right)$ the next time, the row of user $u$ may therefore look different. After doing this (arbitrary) $k$ times, we call our recommender system $R\left(X',\,u,\,N\right)$ to evaluate the accuracy of the top-N recommendation after $k$ questions. 

Let $I_Q$ be the set of items asked by the questionnaire and let $I_R$ be the set of recommended items. We always require $I_Q \cap I_R = \emptyset$ to avoid trivial recommendations.

We found that this procedure simulates the onboarding process of a new user $u$ well and is less biased than other approaches where the user behavior is simulated by some trained model. The following shortcomings remain, however. First, the information contained in $X_{\text{test}}$ may be incomplete because the users do not rate all the items they know. Second, a real new user most likely knows less items than the users contained in $X_{\text{test}}$. We found that these points might have a severe impact on the results, which is why we additionally conducted an online user study that is discussed in Section \ref{sec:userstudy}.

It is common practice to evaluate how well top-N recommender systems perform if they are restricted to lesser known items. Following \cite{cremonesi2010performance}, we refer to the set of most popular items that represent 33\% of the ratings as short-head items. The remaining items are called long-tail items. Accurately recommending short-head items is relatively simple, as we demonstrate in Section \ref{sec:firstapproach}, and often rather pointless as most users, even new users, are likely to be aware of these items. It is much more challenging to find accurate recommender systems for long-tail items, which is why we conduct separate experiments for these. In this setting the questionnaires are still allowed to contain short-head items, of course, but the top-N recommendation is restricted to long-tail items. 

Note that in our experiments we do not apply any kind of item sampling to avoid the problems reported in \cite{krichene2022sampled}. 

\subsubsection{Reproducibility}
Our source code can be found here\footnote{Source code: \url{https://osf.io/myh2q/?view_only=a9a747dee8704203ae594ae0a8cc88f8}}.

\subsection{Bandit Questionnaire}
\label{subsec:qbandit}
As $Q_{\text{Bandit}}$ \cite{towards} is our main baseline algorithm in Section \ref{sec:experiments} and \ref{sec:userstudy}, we would like to discuss it in more detail. The approach assumes that a trained LFM is available without specifying how exactly this model is obtained. An LFM mainly consists of feature vectors, for the users as well as for the items (see, for instance, \cite{Koren2022} for an overview). If we believe that an LFM approximates the rating behavior of the users well, our goal should be to find out the feature vectors of new users in order to predict their ratings.

The idea of $Q_{\text{Bandit}}$ is to approximate the feature vector of the new user by a weighted sum of the feature vectors of the existing users. We can then use this approximation to compute the recommendation for the user in a straightforward manner.

The approach takes a probabilistic point of view and interprets the weighted sum of the feature vectors as the expected feature vector of the new user. This can be done in a mathematically strict manner with the assumption that all users rate according to the LFM plus some random noise that is normally distributed with mean zero. Each existing user is then weighted by the probability that he or she rates as the new user (times a normalization factor).

It remains to specify a strategy, which questions are asked to the new user. The authors of \cite{towards} propose several plausible ways, in the nature similar to the baseline approaches we discuss in Section \ref{sec:firstapproach}. We only consider the method that works best according to \cite{towards}, which they call Thompson Sampling \cite{NIPS2011_e53a0a29}. In this context\footnote{At least, that's our interpretation as in \cite{towards} it is not clearly specified.}, Thompson Sampling means that for each question we draw an existing user from the probability distribution specified by the mentioned weights. Initially, this is the uniform distribution. We then pick the item as question that receives the best rating of the drawn user according to the LFM. 

Note that the way questions are selected would also be a reasonable way to construct recommendations. From an information theoretical point of view this makes sense (to some extend) because the new user most likely only knows a fraction of the items. We should therefore try to find items the user knows to increase the information gain per question. 

\section{A First Approach}
\label{sec:firstapproach}

In this section we would like to demonstrate the difficulty of finding good questionnaires for SLIM. We discuss some straightforward strategies and show their (poor) performance in experiments with our Movielens dataset (ML-25, see Table \ref{tab:data}).

We train SLIM as described in Section \ref{subsec:evalmethod} using the implementation of \cite{slim}, version 2.0, with the default parameters $\lambda_1 = 1$ and $\lambda_F = 1$.\footnote{Unfortunately, we did not have the computational capacity for a proper parameter optimization as the training phase of SLIM is computationally expensive.}

A basic but important approach for a questionnaire is to simply pick the most popular items \cite{rashid2008learning}. Our first questionnaire $Q_{\text{Pop}}$ therefore selects among all items $\mathcal{I}\setminus I_u$ the item with most ratings, where $I_u$ is the set of items the input user $u$ has already rated. We also consider the questionnaire $Q_{\text{Var}}$ that selects items based on a trade-off between high entropy and high popularity \cite{adaptive}. Note that $Q_{\text{Pop}}$ and $Q_{\text{Var}}$ are not at all adapted to SLIM.

We therefore present a more sophisticated approach we call $Q_{\text{Greedy}}$. Let $W$ be the SLIM matrix we obtained from our training with $X_{\text{train}}$ and let $I$ be any subset of the  items $\mathcal{I}$. We define $W_{I}$ to be equal to $W$ for every row $i\in I$ and to be filled with zeros otherwise. We can think of $I$ to be our questionnaire that reveals columns of the interaction matrix, which is equivalent to revealing rows of $W$. 

We start with $I:=\lbrace\rbrace$ and in each step we add the item $i$ to $I$ such that $l_{SLIM}\left(W_I\right)$ is minimized. The motivation behind this approach is that we want to maximize the information gain of every question, which we achieve greedily by minimizing the SLIM loss. 

Note that $Q_{\text{Greedy}}$ has a similar flavor as the approach we present in Section \ref{sec:greedyslim} but is conceptually very different. Here we take a trained matrix $W$ and greedily order the items while the approach in Section \ref{sec:greedyslim} greedily computes the matrix $W$.

To get a better feeling how well the mentioned approaches perform, we add a very simple, static recommender system $R_{\text{Gain}}$ that recommends the same set of items to all users in $\mathcal{U}_{\text{test}}$. It sorts the items by their sum of gain (see Section \ref{subsec:evalmethod}) for the users in $\mathcal{U}_{\text{train}}$ in descending order and returns the first $N$ items as recommendation. We've tried several other static recommender systems and found that this one performs best with respect to NDCG. This approach achieves an average NDCG@10 of $0.314$ for all items and of $0.064$ for long-tail items, which confirms the well-known observation that recommending long-tail items is much more challenging. Clearly, any sophisticated approach must outperform this baseline to be of relevance. 

\begin{figure}
	\includegraphics*[width=1.\columnwidth]{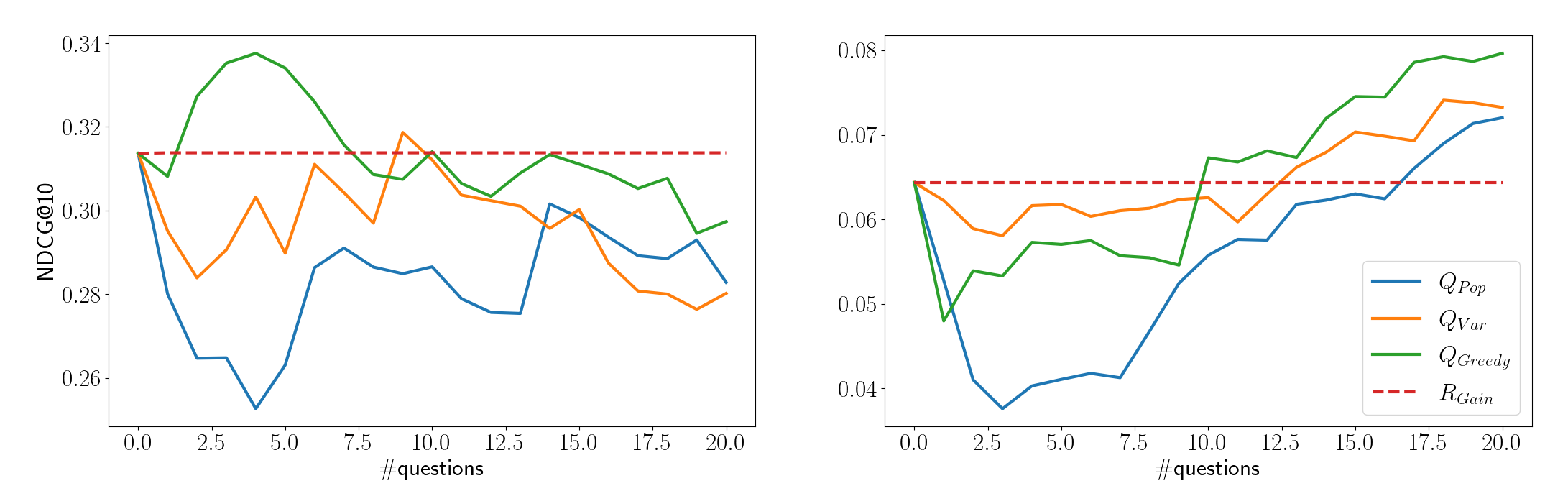}
	\caption{Performance of different SLIM questionnaires and $R_{\text{Gain}}$ for ML-25. Left: all items, right: long-tail items. While for long-tail items the NDCG increases with the number of questions, this is not the case for all items.}
	\label{fig:firstapproach}
\end{figure}

Figure \ref{fig:firstapproach} shows the results of our experiments. Considering all items (left) we can observe that, surprisingly, the recommendation accuracy decreases with the number of questions. The questionnaires mainly consist of popular items that would have been good candidates for the recommendation. As we restrict the recommendation list to items not included in the questionnaire, asking questions about popular items may have a negative effect on the recommendation performance if the recommender system is unable to generalize the gained information. Another interesting observation is that only $Q_{\text{Greedy}}$ is able to achieve better results than $R_{\text{Gain}}$.

Looking at long-tail items (Figure \ref{fig:firstapproach}, right) we can observe that the NDCG first drops below the static approach $R_{\text{Gain}}$, which is used as fallback strategy for all other approaches. However, after the first five questions the performance of the SLIM questionnaires improves with each question as expected. Again, $Q_{\text{Greedy}}$ seems to be the best approach by beating $R_{\text{Gain}}$ after $10$ questions. 

In summary, even though we are partly able to obtain better results than $R_{\text{Gain}}$, we do not think that the presented approaches are practical. The reason is that it takes 15 to 20 questions until we obtain a clear performance benefit, which we consider to be too slow in practice.

\section{Greedy SLIM}
\label{sec:greedyslim}
Motivated by the mediocre performance of SLIM-based questionnaires (see Section \ref{sec:firstapproach}) we present a new approach for the training phase of SLIM that is much more suitable for the questionnaire setting. Our method constructs the SLIM matrix $W$ row by row, each time selecting the item $i\in\mathcal{I}$ that minimizes the SLIM loss. We therefore propose a greedy algorithm for the SLIM training, which is why we call it Greedy SLIM, or GSLIM. Our approach tries to maximize the information gain with every additional row of $W$, which is exactly what a questionnaire tries to achieve. After the training phase our questionnaire $Q_{\text{GSLIM}}$ orders the items used as questions in the same way the rows were added to $W$. After receiving the answers we can use matrix $W$ to compute the recommendations as explained in Section \ref{subsec:slim}.

We initialize $W$ as an $n\times n$ zero-matrix. Our algorithm iteratively fills the rows of $W$. Recall that each row of $W$ corresponds to an item $i\in \mathcal{I}$. Let $I_W\subseteq \mathcal{I}$ be the set of empty rows in $W$, which is initially the entire set $\mathcal{I}$. Furthermore, let $\hat{X}:= X - XW$ with entries $\hat{x}_{ui}$.

We now define the loss functions $l_{ij}\left(w\right):~\mathbb{R}\rightarrow\mathbb{R}$, where $i$ and $j$ are any two items in $\mathcal{I}$, as follows.
\begin{align*}
l_{ij}\left(w\right) := \lambda_1 w + \lambda_F w^2 + \sum_{u\in\mathcal{U}_{\text{train}}} \left(\hat{x}_{uj} - x_{ui}w \right)^2 
\end{align*}
Note that function $l_{ij}$ is very similar to the SLIM loss except that it is restricted to a single element in $W$. Lemma \ref{lem:gslim} describes how these loss functions and the SLIM loss correlate.

\begin{lemma}
	Given a SLIM matrix $W$ with empty rows $I_W\subseteq\mathcal{I}$, let $W'$ be another SLIM matrix that is equal to $W$ except for one row ${\textbf{w}'}_i^T$ with $i\in I_W$. We then have
	\begin{equation}
	\begin{aligned}
	l_{SLIM}\left(W'\right) = l_{SLIM}\left(W\right) + \sum_{j\in\mathcal{I}\setminus\lbrace i\rbrace} \left(l_{ij}\left(w'_{ij}\right) - l_{ij}\left(0\right)\right).
	\end{aligned}
	\label{eq:gslim}
	\end{equation}
	\label{lem:gslim}
\end{lemma}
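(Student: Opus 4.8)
The plan is to prove \eqref{eq:gslim} by a direct expansion, since it is an algebraic identity: I would write out $l_{SLIM}(W') = \lVert X - XW'\rVert_F^2 + \lambda_F\lVert W'\rVert_F^2 + \lambda_1\lVert W'\rVert_1$ and track how each of the three summands changes when the previously empty row $\textbf{w}_i^T$ of $W$ is replaced by ${\textbf{w}'}_i^T$. The starting observation is that, since $W$ and $W'$ agree outside row $i$ and $i\in I_W$ forces $w_{ij}=0$ for all $j$, we get $(XW')_{uj} = (XW)_{uj} + x_{ui}w'_{ij}$ for every $u$ and $j$; moreover the zero-diagonal constraint gives $w'_{ii}=0=w_{ii}$, so column $i$ of $XW$ is left untouched, which is exactly why the sum in \eqref{eq:gslim} ranges over $\mathcal{I}\setminus\{i\}$.

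First I would handle the residual term. Writing $\lVert X - XW'\rVert_F^2 = \sum_{u}\sum_{j}\bigl((X-XW')_{uj}\bigr)^2$ and splitting off the $j=i$ column, that column contributes $\sum_u \hat{x}_{ui}^2$ (unchanged), while each column $j\neq i$ contributes $\sum_{u\in\mathcal{U}_{\text{train}}}\bigl(\hat{x}_{uj} - x_{ui}w'_{ij}\bigr)^2$ because $(X - XW')_{uj} = \hat{x}_{uj} - x_{ui}w'_{ij}$. Subtracting $\lVert X - XW\rVert_F^2 = \sum_u\hat{x}_{ui}^2 + \sum_{j\neq i}\sum_u \hat{x}_{uj}^2$ leaves $\sum_{j\neq i}\sum_{u}\bigl[(\hat{x}_{uj} - x_{ui}w'_{ij})^2 - \hat{x}_{uj}^2\bigr]$ as the change in the first term.

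Next I would dispatch the two regularizers, which is immediate: since row $i$ of $W$ vanishes and $w'_{ii}=0$, only the entries $w'_{ij}$ with $j\neq i$ are new, so $\lambda_F(\lVert W'\rVert_F^2 - \lVert W\rVert_F^2) = \lambda_F\sum_{j\neq i}(w'_{ij})^2$, and using $W'\ge 0$ (so absolute values drop) $\lambda_1(\lVert W'\rVert_1 - \lVert W\rVert_1) = \lambda_1\sum_{j\neq i}w'_{ij}$. Adding the three increments and regrouping, the summand for each fixed $j\neq i$ is $\lambda_1 w'_{ij} + \lambda_F(w'_{ij})^2 + \sum_u\bigl[(\hat{x}_{uj}-x_{ui}w'_{ij})^2 - \hat{x}_{uj}^2\bigr]$, which is exactly $l_{ij}(w'_{ij}) - l_{ij}(0)$ by definition of $l_{ij}$ (note $l_{ij}(0) = \sum_u \hat{x}_{uj}^2$); this gives \eqref{eq:gslim}.

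The computation itself is pure bookkeeping, so the "obstacle" is conceptual rather than technical: one must use $i\in I_W$ so that no old row-$i$ contribution has to be cancelled, use $\text{diag}(W')=0$ so that column $i$ of $XW$ is frozen and the $j=i$ terms drop out of all three norms, and — crucially — notice that $l_{ij}$ is built from the \emph{old} residual $\hat{X}=X-XW$, not from $X-XW'$; this is precisely what makes the per-column differences coincide with $l_{ij}(w'_{ij})-l_{ij}(0)$ exactly. Non-negativity of $W'$ is needed only to rewrite $\lVert W'\rVert_1-\lVert W\rVert_1$ as $\sum_{j\neq i}w'_{ij}$.
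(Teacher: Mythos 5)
Your proof is correct and follows essentially the same route as the paper's: a direct expansion of the three terms of $l_{SLIM}$, using that row $i$ of $W$ is zero so the update is the rank-one perturbation $\textbf{x}_i{\textbf{w}'}_i^T$ of the residual, and matching the per-column increments to $l_{ij}(w'_{ij})-l_{ij}(0)$. If anything, you are slightly more explicit than the paper about why the $j=i$ column drops out (via $w'_{ii}=0$) and where non-negativity is used for the $L_1$ term.
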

\begin{proof}
	First, note that from $i\in I_W$ it follows by definition that $\textbf{w}_i^T$ is filled with zeros.
	Let $\hat{X}:= X - XW$ and $\hat{X}':= X - XW'$. Note that $\hat{X}' = \hat{X} - \textbf{x}_i {\textbf{w}'}_i^T$. We, thus, have
	\begin{align*}
	l_{SLIM}\left(W'\right) &= \lambda_1\lVert W + {\textbf{w}'}_i^T\rVert_1 + \lambda_F\lVert W + {\textbf{w}'}_i^T\rVert_F^2 + \lVert \hat{X} - \textbf{x}_i {\textbf{w}'}_i^T\lVert_F^2\\
	&= \lambda_1\lVert W\rVert_1 + \lambda_F\lVert W\rVert_F^2 + \lVert \hat{X}\lVert_F^2 + \lambda_1\lVert{\textbf{w}'}_i^T\rVert_1 + \lambda_F \lVert {\textbf{w}'}_i^T\lVert_F^2 \\&+ \sum_{j\in\mathcal{I}\setminus\lbrace i\rbrace}\sum_{u\in\mathcal{U}_{train}}\left(\hat{x}_{uj} - x_{ui}w'_{ij} \right)^2 - \hat{x}_{uj}^2\\
	&= l_{SLIM}\left(W\right) + \sum_{j\in\mathcal{I}\setminus\lbrace i\rbrace} \left(l_{ij}\left(w'_{ij}\right) - l_{ij}\left(0\right)\right).
	\end{align*}
\end{proof}
Our goal is to find the matrix $W'$ that minimizes $l_{SLIM}\left(W'\right)$. For a fixed item $i\in I_W$ it follows from Equation \ref{eq:gslim} that minimizing $l_{SLIM}\left(W'\right)$ is equal to minimizing the sum $\sum_{j\in\mathcal{I}\setminus\lbrace i\rbrace} l_{ij}\left(w'_{ij}\right)$. The summands $l_{ij}\left(w'_{ij}\right)$ are (for a fixed $i$) independent from each other and, thus, can be minimized separately. Furthermore, the function $l_{ij}\left(w\right)$ is a quadratic function in $w$ and we can therefore find its minimum by setting its derivative $l'_{ij}\left(w\right)$ to zero.
\begin{equation}
\begin{aligned}
l'_{ij}\left(w\right) &= \lambda_1 + 2\lambda_F w + \sum_{u\in\mathcal{U}_{\text{train}}} \left( -2\hat{x}_{uj}x_{ui} + 2x_{ui}^2w\right)\\
&= 2\left(\lambda_F + \sum_{u\in\mathcal{U}_{\text{train}}}x_{ui}^2\right)w + \lambda_1 - 2 \sum_{u\in\mathcal{U}_{\text{train}}} \hat{x}_{uj}x_{ui}\\
&\stackrel{!}{=} 0
\end{aligned}
\label{eq:derivative}
\end{equation}
Equation \ref{eq:derivative} is solved by
\begin{equation}
w^* = \frac{-\frac{\lambda_1}{2} + \sum_{u\in\mathcal{U}_{\text{train}}} \hat{x}_{uj}x_{ui}}{\lambda_F + \sum_{u\in\mathcal{U}_{\text{train}}}x_{ui}^2}.
\end{equation}
We set $w^*_{ij} := \max\lbrace \frac{-\frac{\lambda_1}{2} + \sum_{u\in\mathcal{U}_{\text{train}}} \hat{x}_{uj}x_{ui}}{\lambda_F + \sum_{u\in\mathcal{U}_{\text{train}}}x_{ui}^2}, 0\rbrace$ in order to satisfy the non-negativity constraint.

In each round of our algorithm we solve the problem
\begin{align*}
i^* = \arg\min_{i\in I_W} \sum_{j\in\mathcal{I}\setminus\lbrace i\rbrace} \left(l_{ij}\left(w^*_{ij}\right) - l_{ij}\left(0\right)\right),
\end{align*}
remove $i^*$ from $I_W$ and fill the $i^*$-th row of $W$ with the values $w^*_{i^*j}$. Note that this step also changes $\hat{X}$ and the loss functions $l_{ij}$. We repeat until $I_W$ is the empty set.

\section{Offline Experiments}
\label{sec:experiments}
\subsection{Datasets}
We consider two well-known datasets in our experiments Movielens-25M (ML-25) \cite{movielens} and the Netflix \cite{bennett2007netflix} dataset (see Table \ref{tab:data} for details). We do not filter these sets to ensure that we do not accidentally introduce any biases.
\begin{table}
	\centering
	\caption{Datasets}
	\label{tab:data}
	\begin{tabular}{ccccc}
		\toprule
		Name&Users&Items&Ratings\\
		\midrule
		ML-25\footnote{Publicly available dataset described in \cite{movielens}. Download link: \url{https://grouplens.org/datasets/movielens/25m/}}  & $162,541$ & $59,047$ & $25,000,095$\\
		Netflix\footnote{Publicly available dataset described in \cite{bennett2007netflix}.}  & $480,189$ & $17,770$ & $100,480,507$\\
		\bottomrule
	\end{tabular}
\end{table}
We split the datasets into training and test sets as described in Section \ref{subsec:evalmethod}.

\subsection{Baselines}
Our main baseline is $Q_{\text{Bandit}}$, see Section \ref{subsec:qbandit} for a description. $Q_{\text{Bandit}}$ requires a trained LFM as input and in \cite{towards} it is not specified how the LFM is trained. We decided to use the approach called PureSVD \cite{cremonesi2010performance} for the following reasons. First, PureSVD outperforms other training methods regarding top-N recommendations \cite{cremonesi2010performance}. Second, the approach is conceptually simple and there are reliable implementations available. Thus, we keep the risk low that we run into the problems discussed in \cite{difficulty}. We used the implementation called redsvd \cite{redsvd}.

Apart from $Q_{\text{Bandit}}$ we compare our method with the baselines $R_{\text{Gain}}$ and $Q_{\text{Greedy}}$ as described in Section \ref{sec:firstapproach}.

\subsection{Training Phase}
For Netflix it took 9 minutes per row to compute the SLIM matrix, for ML-25 it took even 19 minutes. It is therefore often not feasible to compute the complete SLIM matrix with this approach. Fortunately, for our purpose it suffices to compute the number of rows that equals the size of the questionnaire, which is 20 in our experiments. 

As our questionnaire is static, we can compute the SLIM matrix in a preprocessing step and do not have to worry too much  about the run time.
This is also the reason why optimizing the computational cost of the training phase is not the focus of this work. We are confident that there are many ways to improve the run´ times and leave this task as future work.

The main reason for the high computational effort is that we are unable to hold the matrix $\hat{X}$ in memory. We therefore compute it in each iteration from scratch. Furthermore, we consider each item as possible candidate for the next row, even items with very few ratings. This is the reason why the training for ML-25 takes much longer than for Netflix. Excluding lesser known items would greatly improve the run times. Lastly, the parallelization of our implementation is on a simple level, leaving much room for improvement.

Our baseline $Q_{\text{Bandit}}$ clearly outperforms $Q_{\text{GSLIM}}$ in this step. The training phase of PureSVD took $96$ seconds for Netflix and $25$ seconds for ML-25.

\subsection{Parameter Optimization}
Before conducting the actual experiments, we needed to set the parameters for $Q_{\text{GSLIM}}$ and $Q_{\text{Bandit}}$ meaningfully. The parameters of $Q_{\text{GSLIM}}$ are the regularization variables $\lambda_1$ and $\lambda_F$, while for $Q_{\text{Bandit}}$ we need to specify the number of features $\lambda_{\text{LFM}}$.

Regarding $Q_{\text{GSLIM}}$, we tried all combinations of powers of two between $2^0$ and $2^{19}$ for $\lambda_1$ and $\lambda_F$. For $Q_{\text{Bandit}}$, we considered all multiples of $100$ between $200$ and $800$ for $\lambda_{\text{LFM}}$. Note that one test run of $Q_{\text{Bandit}}$ took between $20-100$ times longer than one test run of $Q_{\text{GSLIM}}$ (due to its expensive evaluation phase), such that we roughly invested the same computational effort in both approaches.

We conducted our parameter optimization by splitting $X_{\text{train}}$ into a new test and training set following a similar procedure as explained in Section \ref{subsec:evalmethod}. Most importantly, we did not consider $X_{\text{test}}$ in this phase. Our results are shown in Table \ref{tab:paramopt}. Interestingly, our parameters for $Q_{\text{GSLIM}}$ are much larger than those reported in \cite{ning2011slim} for SLIM.

\begin{table}
	\centering
	\begin{tabular}{cccc}
		\toprule
		Questionnaire& ML-25 & Netflix \\
		\midrule
		$Q_{\text{Bandit}}$ & $\lambda_{\text{LFM}} = 700$ & $\lambda_{\text{LFM}} = 800$ \\
		$Q_{\text{GSLIM}}$ & $\lambda_1 = 2^{12}$, $\lambda_F = 2^{16}$ & $\lambda_1 = 2^{15}$, $\lambda_F = 2^{10}$ \\
		\bottomrule
	\end{tabular}
	\caption{This table shows the results of the parameter optimization for both considered datasets.}
	\label{tab:paramopt}
\end{table}

\subsection{Top-N Recommendation}
We conducted our experiments as explained in Section \ref{subsec:evalmethod}. Figure \ref{fig:experiments} shows the average results with respect to the NDCG@10. The NDCG of $Q_{\text{Bandit}}$ converges or even decreases after ten questions. We think that the main reason for the decreasing NDCG is that the questionnaires are not allowed to recommend items that they contain as questions. Therefore, using popular items as questions may have a negative impact on the recommendation performance if the questionnaire is unable to generalize the gained information. $Q_{\text{GSLIM}}$ does not suffer from this restriction. On the contrary, even after 20 questions there is no sign of convergence or overfitting.  

\begin{figure}[h]
	\centering
	\includegraphics*[width=0.7\columnwidth]{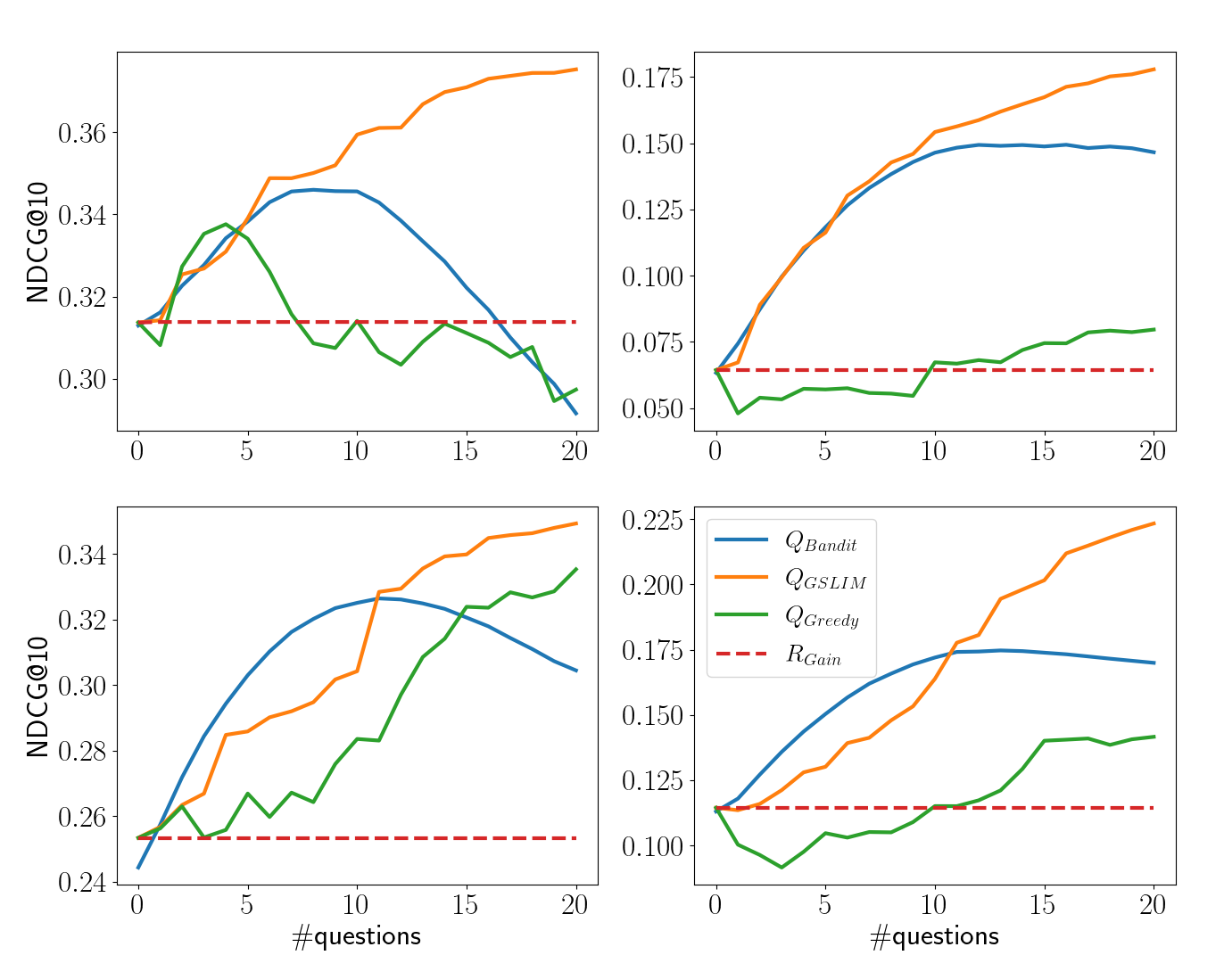}
	\caption{Comparison of our approach $Q_{\text{GSLIM}}$ with the baselines $Q_{\text{Bandit}}$, $R_{\text{Gain}}$ and $Q_{\text{Greedy}}$. Top: ML-25, bottom: Netflix, left: all items, right: long-tail items.}
	\label{fig:experiments}
\end{figure}

Looking at the numbers, shown in Table \ref{tab:expresults}, two tendencies can be observed. First, $Q_{\text{Bandit}}$ seems to perform slightly better than $Q_{\text{GSLIM}}$ if the questionnaire consists of very few questions. This is especially the case with the Netflix dataset, where $Q_{\text{Bandit}}$ clearly outperforms $Q_{\text{GSLIM}}$ during the first ten questions. Second, $Q_{\text{GSLIM}}$ is able to profit from every additional question, which leads to favorable results if the questionnaire consists of ten or more questions.

\begin{table}
	\centering
	\begin{tabular}{clcccc}
		\toprule
		&& ML-25 & & Netflix &\\
		\#Q&&$Q_{\text{GSLIM}}$&$Q_{\text{Bandit}}$&$Q_{\text{GSLIM}}$&$Q_{\text{Bandit}}$\\
		\midrule
		&NDCG@5 & $0.3480$ & $\textbf{0.3591}$ & $0.2931$ & $\textbf{0.3173}$ \\
		&NDCG@10 & $\textbf{0.3390}$ & $0.3382$ & $0.2859$ & $\textbf{0.3030}$ \\
		5&Precision@5 & $0.3959$ & $\textbf{0.4040}$ & $0.3375$ & $\textbf{0.3639}$ \\
		&Precision@10 & $\textbf{0.3686}$ & $0.3590$ & $0.3256$ & $\textbf{0.3384}$ \\
		&Recall@5 & $\textbf{0.0815}$ & $0.0805$ & $0.0367$ & $\textbf{0.0481}$ \\
		&Recall@10 & $\textbf{0.1250}$ & $0.1220$ & $0.0648$ & $\textbf{0.0783}$ \\
		\bottomrule
		&NDCG@5 & $\textbf{0.3720}$ & $0.3651$ & $0.3148$ & $\textbf{0.3425}$ \\
		&NDCG@10 & $\textbf{0.3594}$ & $0.3456$ & $0.3042$ & $\textbf{0.3251}$ \\
		10&Precision@5 & $\textbf{0.4195}$ & $0.4088$ & $0.3663$ & $\textbf{0.3907}$ \\
		&Precision@10 & $\textbf{0.3868}$ & $0.3642$ & $0.3456$ & $\textbf{0.3599}$ \\
		&Recall@5 & $\textbf{0.0880}$ & $0.0857$ & $0.0473$ & $\textbf{0.0547}$ \\
		&Recall@10 & $\textbf{0.1375}$ & $0.1314$ & $0.0777$ & $\textbf{0.0874}$ \\
		\bottomrule
		&NDCG@5 & $\textbf{0.3857}$ & $0.3388$ & $\textbf{0.3586}$ & $0.3381$ \\
		&NDCG@10 & $\textbf{0.3709}$ & $0.3221$ & $\textbf{0.3398}$ & $0.3206$ \\
		15&Precision@5 & $\textbf{0.4347}$ & $0.3785$ & $\textbf{0.4050}$ & $0.3855$ \\
		&Precision@10 & $\textbf{0.3987}$ & $0.3358$ & $\textbf{0.3756}$ & $0.3539$ \\
		&Recall@5 & $\textbf{0.0927}$ & $0.0797$ & $0.0527$ & $\textbf{0.0547}$ \\
		&Recall@10 & $\textbf{0.1437}$ & $0.1229$ & $0.0843$ & $\textbf{0.0873}$ \\
		\bottomrule
		&NDCG@5 & $\textbf{0.3895}$ & $0.3053$ & $\textbf{0.3686}$ & $0.3205$ \\
		&NDCG@10 & $\textbf{0.3752}$ & $0.2916$ & $\textbf{0.3493}$ & $0.3045$ \\
		20&Precision@5 & $\textbf{0.4399}$ & $0.3376$ & $\textbf{0.4149}$ & $0.3667$ \\
		&Precision@10 & $\textbf{0.4034}$ & $0.3004$ & $\textbf{0.3847}$ & $0.3360$ \\
		&Recall@5 & $\textbf{0.0961}$ & $0.0698$ & $\textbf{0.0548}$ & $0.0519$ \\
		&Recall@10 & $\textbf{0.1474}$ & $0.1107$ & $\textbf{0.0872}$ & $0.0830$ \\
		\bottomrule
	\end{tabular}
~\\
	\begin{tabular}{clcccc}
		\toprule
		&& ML-25 & & Netflix &\\
		\#Q&&$Q_{\text{GSLIM}}$&$Q_{\text{Bandit}}$&$Q_{\text{GSLIM}}$&$Q_{\text{Bandit}}$\\
		\midrule
		&NDCG@5 & $0.1152$ & $\textbf{0.1162}$ & $0.1359$ & $\textbf{0.1579}$ \\
		&NDCG@10 & $0.1163$ & $\textbf{0.1183}$ & $0.1301$ & $\textbf{0.1504}$ \\
		5&Precision@5 & $\textbf{0.1215}$ & $0.1177$ & $0.1572$ & $\textbf{0.1754}$ \\
		&Precision@10 & $\textbf{0.1103}$ & $0.1066$ & $0.1462$ & $\textbf{0.1599}$ \\
		&Recall@5 & $0.0291$ & $\textbf{0.0338}$ & $0.0152$ & $\textbf{0.0254}$ \\
		&Recall@10 & $0.0481$ & $\textbf{0.0558}$ & $0.0267$ & $\textbf{0.0421}$ \\
		\bottomrule
		&NDCG@5 & $\textbf{0.1539}$ & $0.1461$ & $0.1703$ & $\textbf{0.1821}$ \\
		&NDCG@10 & $\textbf{0.1543}$ & $0.1465$ & $0.1638$ & $\textbf{0.1720}$ \\
		10&Precision@5 & $\textbf{0.1570}$ & $0.1473$ & $0.1931$ & $\textbf{0.2008}$ \\
		&Precision@10 & $\textbf{0.1422}$ & $0.1312$ & $0.1788$ & $\textbf{0.1805}$ \\
		&Recall@5 & $0.0424$ & $\textbf{0.0428}$ & $0.0255$ & $\textbf{0.0316}$ \\
		&Recall@10 & $\textbf{0.0690}$ & $0.0684$ & $0.0409$ & $\textbf{0.0508}$ \\
		\bottomrule
		&NDCG@5 & $\textbf{0.1682}$ & $0.1491$ & $\textbf{0.2195}$ & $0.1848$ \\
		&NDCG@10 & $\textbf{0.1674}$ & $0.1488$ & $\textbf{0.2016}$ & $0.1739$ \\
		15&Precision@5 & $\textbf{0.1736}$ & $0.1479$ & $\textbf{0.2369}$ & $0.2029$ \\
		&Precision@10 & $\textbf{0.1555}$ & $0.1320$ & $\textbf{0.2089}$ & $0.1809$ \\
		&Recall@5 & $\textbf{0.0466}$ & $0.0437$ & $0.0323$ & $\textbf{0.0327}$ \\
		&Recall@10 & $\textbf{0.0756}$ & $0.0691$ & $0.0491$ & $\textbf{0.0519}$ \\
		\bottomrule
		&NDCG@5 & $\textbf{0.1788}$ & $0.1469$ & $\textbf{0.2444}$ & $0.1804$ \\
		&NDCG@10 & $\textbf{0.1779}$ & $0.1466$ & $\textbf{0.2234}$ & $0.1700$ \\
		20&Precision@5 & $\textbf{0.1806}$ & $0.1459$ & $\textbf{0.2620}$ & $0.1988$ \\
		&Precision@10 & $\textbf{0.1613}$ & $0.1299$ & $\textbf{0.2293}$ & $0.1767$ \\
		&Recall@5 & $\textbf{0.0499}$ & $0.0425$ & $\textbf{0.0374}$ & $0.0321$ \\
		&Recall@10 & $\textbf{0.0796}$ & $0.0680$ & $\textbf{0.0555}$ & $0.0509$ \\
		\bottomrule
	\end{tabular}
	\caption{From left to right: Number of questions, metric, results for ML-25, results for Netflix. Top: recommendations for all items, bottom: recommendations for long-tail items.}
	\label{tab:expresults}
\end{table}

In summary, while our approach $Q_{\text{GSLIM}}$ shows a strong learning curve with respect to the number of questions, it seems to depend on the application which approach to choose. If the questionnaire should be very short, $Q_{\text{Bandit}}$ is probably the preferred method. Otherwise, $Q_{\text{GSLIM}}$ performs better.

\section{User Study}
\label{sec:userstudy}

\begin{figure}[h]
	\centering
	\includegraphics*[width=.7\columnwidth]{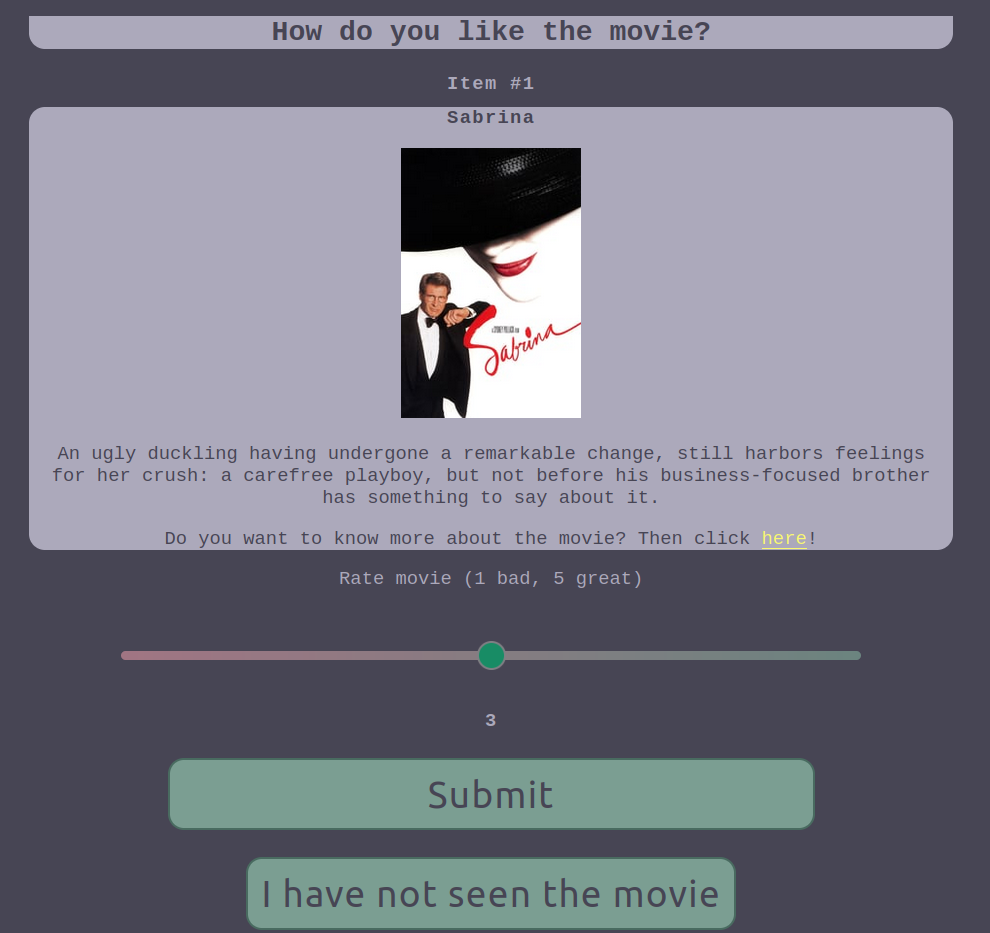}
	\caption{This figure shows the layout of a question. It contained the title, a poster and an abstract of the movie. The layout of a recommendation looked similar.}
	\label{fig:study}
\end{figure}

As discussed in Section \ref{subsec:evalmethod}, we found it necessary to conduct a user study to confirm the results of our offline experiments. Our study consisted of two parts. In the first part, the questionnaire part, the participants were asked to rate ten movies. They could give any rating from one (very bad) to five (very good) or say that they don't know the movie, which we refer to as zero-rating. 

In the second part, ten recommendations were shown and the users were asked whether they know the movie and whether they like (or think they like) the movie. There was one negative option (bad), two positive options (good, very good) and one neutral option (don't know).

Given our limited resources, we decided to compare only $Q_{\text{GSLIM}}$, $Q_{\text{Bandit}}$ and $R_{\text{Gain}}$. To do so, the participants were split randomly into two groups. The first group received the questions from $Q_{\text{GSLIM}}$ and the second group from $Q_{\text{Bandit}}$. Every participant received the top-5 recommendation from $R_{\text{Gain}}$ (which is a static recommendation) as well as the top-5 recommendation from the recommender systems that corresponds to their questionnaire. Note that the participants did neither know in which group they were, nor which algorithm computed their questions or recommendations. 

Our algorithms were trained using a subset of ML-25. We decided to exclude items with very few ratings and items produced before 1995 to make sure that our algorithms (especially $Q_{\text{Bandit}}$) are fast enough for a good user experience. We also excluded items for which we did not find the information we wanted to show the users (see Figure \ref{fig:experiments}).

Furthermore, we tried to ensure that the task of recommendation would not be too simple. First, we removed all but one representative of popular movie series. Second, all recommendations were limited to long-tail items, as explained in Section \ref{subsec:evalmethod}. The questionnaires could ask to rate short-head items, though.

The remaining dataset consisted of $162,541$ users, $8,070$ items and $13,705,543$ ratings.

Our user study had 103 participants, many of them were somehow linked to the authors. Thus, we don't claim that the study is representative. 

\subsection{Questionnaire Phase}

Interestingly, $Q_{\text{GSLIM}}$ and $Q_{\text{Bandit}}$ pursue very different strategies for the item selection. While $Q_{\text{Bandit}}$ selects items that might be good recommendations, this is not so much the case for $Q_{\text{GSLIM}}$. 

\begin{table}
	\centering
	\begin{tabular}{ccccc}
		\toprule
		Questionnaire&Known Items&Average Rating\\
		\midrule
		$Q_{\text{Bandit}}$ & $48.7$\% & $3.99$\\
		$Q_{\text{GSLIM}}$ & $34.4$\% & $3.67$\\
		\bottomrule
	\end{tabular}
	\caption{This table shows some statistics about the questionnaire phase. For instance, among all items selected by $Q_{\text{Bandit}}$ $48.7$\% received a rating other than zero. Among these items the average rating was $3.99$.}
	\label{tab:questions}
\end{table}

Table \ref{tab:questions} shows the statistics. Only $34.4$\% of the items selected by $Q_{\text{GSLIM}}$ received a rating other than zero whereas almost half of the items selected by $Q_{\text{Bandit}}$ where known to the participants. Among the known items, the items selected by $Q_{\text{Bandit}}$ also received the better ratings.

\subsection{Recommendation Phase}

Table \ref{tab:recommendations} shows the results of the recommendation phase, which are surprisingly clear. $Q_{\text{GSLIM}}$ performs remarkably well in all aspects while $Q_{\text{Bandit}}$ is not significantly better than the static recommender $R_{\text{Gain}}$.

\begin{table}
	\centering
	\begin{tabular}{lcccc}
		\toprule
		&$R_{\text{Gain}}$ & $Q_{\text{Bandit}}$ & $Q_{\text{GSLIM}}$\\
		\midrule
		Positive Feedback (PF) & $52.0$\% & $52.6$\% & $77.2$\%\\
		Very Positive Feedback & $15.7$\% & $13.9$\% & $36.1$\%\\
		Unknown Items & $54.5$\% & $51.3$\% & $25.6$\%\\
		PF Among Unknown Items & $30.0$\% & $28.0$\% & $43.8$\%\\
		\bottomrule
	\end{tabular}
	\caption{We measured for each recommender system the positive feedback, also known as hit rate, the very positive feedback, the percentage of unknown items among the recommendations and among these items the positive feedback. For instance, $52.0$\% of the recommendations generated by $R_{\text{Gain}}$ received a positive feedback by the participants, $15.7$\% received a very positive feedback and $54.5$\% of the recommended items were unknown to the participants.}
	\label{tab:recommendations}
\end{table}

The recommendations of $Q_{\text{GSLIM}}$ received a positive feedback of $77.2$\%, much more than the other two approaches. Also, if we only consider the recommendations of items unknown to the participants, the positive feedback of $43.8$\% for $Q_{\text{GSLIM}}$ is outstanding. Thus, even though the user study was small and perhaps not fully representative, there is at least a strong indication that $Q_{\text{GSLIM}}$ shows very favorable results in practice.

Regarding serendipity, one could argue that $Q_{\text{GSLIM}}$ shows worse results than the other two approaches as only $25.6$\% of the recommendations concerned unknown items. Also, regarding the product of unknown items and positive feedback among unknown items, which could be considered as the true relevant recommendations, $Q_{\text{GSLIM}}$ does not perform very well. 

From a practical point of view, this might indeed be a weak spot of $Q_{\text{GSLIM}}$ that should be improved. However, it is important to note that our design of $Q_{\text{GSLIM}}$ does not consider serendipity at all. The only objective of $Q_{\text{GSLIM}}$ (as well as of $Q_{\text{Bandit}}$) is to compute recommendations that receive positive feedback. Thus, we would blame $Q_{\text{GSLIM}}$ for doing a great job. 

Furthermore, even from a practical point of view, we do not think that the product of unknown recommendations and positive feedback among unknown items is what we want to maximize. We believe that, from a user's perspective, it is much easier to identify known items than irrelevant unknown items. Thus, the rate of positive feedback among unknown items is probably the most important measure while the rate of unknown items should not be too low.

\section{Conclusions}
\label{sec:conclusions}
In this work we discussed the idea of eliciting user preferences with SLIM, a popular approach for top-N recommendation. We first showed in Section \ref{sec:firstapproach} that the common way to train SLIM is unsuitable for this task. This motivated the presentation of a new training algorithm for SLIM called Greedy SLIM that computes the SLIM matrix greedily row by row. 

We showed in offline experiments that with Greedy SLIM it is possible to construct a questionnaire $Q_{\text{GSLIM}}$ that improves the recommendation performance for new users considerably and that it achieves better results than a popular approach based on latent factors, especially if the questionnaire contains more than ten questions.

Furthermore, we conducted a user study and were able to confirm the findings of the offline experiments also in practice. In fact, the results  of $Q_{\text{GSLIM}}$ in the user study are even more remarkable. We, hence, dare to conclude that SLIM-based questionnaires are an excellent alternative to the existing LFM-based approaches.

Even though we've tried our best to conduct a meaningful user study, we have to notice that the results might not be representative. We would therefore like to repeat this experiment on larger scale and would be grateful for support, if anyone is interested.

A small drawback of Greedy SLIM is its expensive training phase. It took hours to merely compute 20 rows of the SLIM matrix. Improving these times is a promising open problem. 

Furthermore, there are many possible extensions to $Q_{\text{GSLIM}}$. For instance, a dynamic version of $Q_{\text{GSLIM}}$ would be interesting that does not ask the same questions to each user. Moreover, it would be worthwhile investigating if the presented ideas can be applied to other types of questions such as pairwise questions.

Another interesting aspect which we did not cover is whether our greedy approach can be applied to other problem settings as well. It is possible that this training method leads to better results than the standard training approach of SLIM with coordinate descent. The big obstacle, however, is the preprocessing time of Greedy SLIM, which is even worse than with standard SLIM.
\section*{Acknowledgments}
This research was funded by the German Federal Ministry of Education and Research (BMBF) through grants 01IS17051 Software Campus 2.0.

\bibliographystyle{unsrtnat}

\end{document}